\documentclass[a4paper,USenglish]{lipics-v2021}

\usepackage{xspace}
\usepackage{xcolor}
\usepackage{algorithm}
\usepackage[indLines=false]{algpseudocodex}
\usepackage{booktabs}
\usepackage{cleveref}
\usepackage{aliascnt}
\usepackage{breqn}

\makeatletter
\let\lemma\@undefined
\let\endlemma\@undefined
\let\c@lemma\@undefined
\newaliascnt{lemma}{theorem}
\theoremstyle{plain}
\newtheorem{lemma}[lemma]{Lemma}
\aliascntresetthe{lemma}

\let\corollary\@undefined
\let\endcorollary\@undefined
\let\c@corollary\@undefined
\newaliascnt{corollary}{theorem}
\newtheorem{corollary}[corollary]{Corollary}
\aliascntresetthe{corollary}
\makeatother

\makeatletter
\algblockdefx[Upon]{Upon}{EndUpon}[1]{%
    \algpx@startCodeCommand\algpx@startIndent\textbf{upon}\ #1\ \textbf{do}%
}{%
    \algpx@endIndent%
}
\pretocmd{\Upon}{\algpx@endCodeCommand}{}{}
\pretocmd{\EndUpon}{\algpx@endCodeCommand[1]}{}{}
\algtext*{EndUpon}
\pretocmd{\Statex}{\algpx@endCodeCommand}{}{}
\makeatother

\algnewcommand\HalfStatex{\Statex\vspace{-0.4\baselineskip}}

\algnewcommand\algbroadcast{\textbf{broadcast}\xspace}
\algnewcommand\algfrom{\textbf{from}\xspace}
\algnewcommand\algwith{\textbf{with}\xspace}
\algnewcommand\algwhile{\textbf{while}\xspace}
\algnewcommand\algand{\textbf{AND}\xspace}
\algnewcommand\algor{\textbf{OR}\xspace}
\algnewcommand\algschedule{\textbf{schedule}\xspace}
\algnewcommand\algafter{\textbf{after}\xspace}
\algnewcommand\algcancel{\textbf{cancel}\xspace}

\newcommand{\li}[1]{\langle#1\rangle}
\newcommand{\assign}{\leftarrow}

\newcommand{\Proposal}{\mathsf{PROPOSAL}}

\newcommand{\Precommit}{\mathsf{PRECOMMIT}}

\newcommand{\coord}{\mathsf{proposer}}
\newcommand{\timeoutPropose}{\mathsf{timeoutPropose}}
\newcommand{\timeoutPrecommit}{\mathsf{timeoutPrecommit}}
\newcommand{\getValue}{\mathsf{getValue}}
\newcommand{\valid}{\mathsf{validate}}

\newcommand{\nil}{nil}
\newcommand{\id}{id}
\newcommand{\propose}{propose}
\newcommand{\precommit}{precommit}

\newcommand{\ftm}{Fast Tendermint\xspace}

\newcommand{\freshProposalRule}{fresh-proposal rule\xspace}
\newcommand{\reProposalRule}{re-proposal rule\xspace}
\newcommand{\precommitQuorumRule}{quorum any rule\xspace}
\newcommand{\observationRule}{observation rule\xspace}
\newcommand{\decisionRule}{decision rule\xspace}

\newcommand{\myparagraph}[1]{\vspace{1ex} \noindent \textbf{#1}}

\nolinenumbers

\newif\iflong
\longtrue

\iflong
    \title{Fast Tendermint: Speeding Up a Foundational Consensus Protocol}
\else
    \title{Brief Announcement: Fast Tendermint: Speeding Up a Foundational Consensus Protocol}
\fi
\titlerunning{Fast Tendermint}

\author{Preston Vander Vos}{Circle Research, USA}{preston.vandervos@circle.com}{0009-0004-7206-5839}{}
\author{Daniel Cason}{Circle Research, Switzerland}{daniel.cason@circle.com}{0000-0003-1598-106X}{}

\authorrunning{P. Vander Vos and D. Cason}
\Copyright{Preston Vander Vos and Daniel Cason}

\ccsdesc{Theory of computation~Distributed algorithms}

\keywords{BFT consensus, Tendermint, partial synchrony, good-case latency}

\iflong
    \hideLIPIcs
\else
    \relatedversion{https://arxiv.org/abs/TODO}
    \category{Brief Announcement}
\fi

\EventEditors{Ioannis Chatzigiannakis, Andrea Vitaletti, Keren Censor-Hillel, and William K. Moses Jr.}
\EventNoEds{4}
\EventLongTitle{40th International Symposium on Distributed Computing (DISC 2026)}
\EventShortTitle{DISC 2026}
\EventAcronym{DISC}
\EventYear{2026}
\EventDate{November 9--13, 2026}
\EventLocation{Rome, Italy}
\EventLogo{}
\SeriesVolume{397}
\ArticleNo{55}

\begin{document}

\maketitle

\begin{abstract}
    Tendermint is among the most widely studied and deployed Byzantine fault-tolerant (BFT) consensus protocols, owing in part to its native leader-rotation mechanism that subsumes complex view changes.
Like most partially-synchronous BFT protocols, Tendermint tolerates $f < n/3$ Byzantine processes and decides in three communication steps.
Motivated by the push for lower-latency blockchains, a recent line of work shows that consensus can be solved in two communication steps when $f < n/5$.
We present \ftm, an adaptation of Tendermint to the $n > 5f$ setting that decides in two communication steps in the good case, while preserving Tendermint's leader-rotation structure.
\ftm collapses Tendermint's prevote and precommit steps into a single voting step and merges the $locked$ and $valid$ state.
We give proofs of agreement, validity, and termination,
and a formal specification in Quint, a modern surface syntax for TLA\textsuperscript{+}, used to model-check the protocol.

\end{abstract}

\iflong
\else
	\newpage
\fi

\section{Introduction} \label{sec:intro}

Consensus is a fundamental abstraction in fault-tolerant distributed systems, at the core of state-machine replication (SMR)~\cite{lamport78, schneider90}; Byzantine fault-tolerant (BFT) consensus~\cite{lamport82} gained renewed attention with the advent of blockchain-based~\cite{bitcoin} SMR.
Inspired by PBFT~\cite{pbft} and the DLS protocol~\cite{dwork88},
Tendermint~\cite{tendermint} has become one of the most widely studied~\cite{tendermint-idiosyncrasies, tendermint-correctness, tendermint-dissecting, tendermint-design}
and deployed~\cite{cometbft, malachite, cosmos, arc-tendermint} BFT consensus protocols.
Unlike classical BFT protocols, where leader changes are an exceptional event triggered by suspected faults,
Tendermint builds in leader rotation as every round designates a new leader.
A direct consequence is that Tendermint avoids the complex view-change procedures of classical, PBFT-style protocols.
Leader rotation and progress are handled by a single, uniform mechanism, simplifying both reasoning about the protocol and its implementation.
Like most partially-synchronous BFT protocols, Tendermint assumes $f < n/3$
and consequently decides in three communication steps, a tight lower bound at this fault tolerance~\cite{abraham2022}.

Motivated by the ongoing push to make blockchains ever faster,
a recent line of work departs from this setting and instead operates with 
$f < n/5$~\cite{minimmit,alpenglow,fab,bluebottle}
or below the $f<n/3$ threshold~\cite{kudzu, hydrangea}.
At this reduced degree of fault tolerance, BFT consensus can be solved in
two communication steps.
In fact, there are algorithms with good-case latency of two communication steps with
$n \ge 5f-1$~\cite{abraham2022,kuznetsov2021};
however, the additional structural complexity required at this tighter bound appears to have discouraged its adoption in practice.

Instead of creating a new $5f+1$ consensus protocol from scratch, can an existing, battle-tested protocol be adapted to this setting?
We answer affirmatively with Tendermint, which has been deployed in production blockchains~\cite{cosmos, arc-tendermint} and has stood the test of time.
Our contribution is the \ftm protocol, which can decide after one step of voting by receiving $n-f$ precommits for a proposed value.
Tendermint, by comparison, can decide after two steps of voting, $n-f$ prevotes and then $n-f$ precommits for a proposed value.
We accompany it with rigorous safety and liveness proofs, as well as a formal verification in Quint~\cite{quint}, a modern surface syntax for TLA\textsuperscript{+}~\cite{tla}, used to model-check the protocol\footnote{Available at \url{https://github.com/circlefin/formal-tendermint/tree/master/fast-tendermint}.}.

\section{Model and Properties} \label{sec:model}

\ftm relies on the same system model as the original Tendermint consensus algorithm~\cite{tendermint}.
We consider a system of $n$ processes that communicate by exchanging messages.
At most $f$ processes may be Byzantine and behave arbitrarily; correct processes follow the protocol.
\ftm requires $n > 5f$ while the original Tendermint requires $n > 3f$.
We assume a public-key infrastructure: protocol messages are signed,
public keys are known to all processes,
and correct processes only consider messages with valid signatures.

We consider the partially-synchronous system model~\cite{dwork88}:
there is a bound $\Delta$ and a Global Stabilization Time ($GST$) such that every message sent after $GST$ is reliably delivered within $\Delta$.
On top of it, we assume the same gossip communication property as Tendermint:
\vspace{-0.7ex}

\begin{itemize}
    \item \emph{Gossip communication:} If a correct process sends a message $m$ at time $t$, then all correct processes receive $m$ before $\max\{t, GST\} + \Delta$.
    Furthermore, if a correct process receives a message $m$ at time $t$, then all correct processes receive $m$ before
    $\max\{t, GST\} + \Delta$.
\end{itemize}

\ftm solves Byzantine consensus with external validity~\cite{cachin01},
a BFT consensus variation motivated by blockchain systems,
defined by the following properties:
\vspace{-0.7ex}

\begin{itemize}
    \item \emph{Agreement (safety):} No two correct processes decide on different values.
    \item \emph{Validity (safety):} Decided values satisfy the application-specific predicate $\valid()$.
    \item \emph{Termination (liveness):} All correct processes eventually decide on a value.
\end{itemize}

\section{\ftm Protocol} \label{sec:protocol}

\noindent
This section walks through \ftm, depicted in \Cref{algo:ftm}, and highlights its differences from Tendermint.
\ftm proceeds as a sequence of consensus instances,
called \emph{heights}, where each height decides a value.
As heights are independent of each other, we focus on the operation of a single height and omit heights from the pseudo-code and messages.
A height consists of one or more \emph{rounds},
always starting from round $0$.
Each round is led by a designated process, the \emph{proposer},
and is an attempt to reach a decision.

A round in \ftm consists of two \emph{steps}: 
\emph{propose} and \emph{precommit}.
Each step has an associated message:
$\Proposal$ and $\Precommit$.
Compared with the original Tendermint, the intermediate \emph{prevote} step and the associated message are dropped.
The core state of a process in Tendermint is stored in two pairs of variables, storing a value and a round: $locked$ and $valid$.
A process \emph{locks} a value when it sends a precommit for it,
then rejects values that are not its locked
one---$locked$ is a safety mechanism.
A value becomes $valid$ when other processes may have locked it;
a correct proposer re-proposes its highest known valid value as it should be an acceptable value---$valid$ is a liveness mechanism.
In \ftm, these two variables are merged into $valid$, which is a pair $(round, value)$.

\begin{algorithm}[htb!]
    \caption{\ftm consensus: tolerates $f$ Byzantine faults with $n > 5f$ processes.}
    \label{algo:ftm}
    \footnotesize %
    \begin{algorithmic}[1]
		\State $round_p := 0$ \label{line:initRound}
		\State $step_p \in \{\propose, \precommit\}$ \label{line:initStep}
		\State $decision_p := \nil$ \label{line:initDecision}
		\State $valid_p := (-1, \nil)$ \label{line:initValid}
			\Comment{$(round,\ \id(value))$}
        \HalfStatex

        \Function{StartRound}{round} \label{line:startRound}
            \State $round_p \assign round$ \label{line:setRound}
            \State $step_p \assign \propose$ \label{line:setStepPropose}
            \If{$\coord(round_p) = p$} \label{line:ifProposer}
                \If{$round_p > 0$} \label{line:ifRoundPositive}
                    \State \Call{WaitForValid}{$\timeoutPrecommit(round_p)$}  
                \EndIf
                \If{$valid_p.value = \nil$} \label{line:ifNotValid}
                    \State $proposal \assign \getValue()$ \label{line:getNewValue} \Comment{Fresh proposals carry a full value $v$}
                \Else
					\State $proposal \assign valid_p.value$ \label{line:proposeValidValue} \Comment{Re-proposals carry a value identifier \id(v)}
                \EndIf
                \State \algbroadcast $\li{\Proposal, round_p, proposal, valid_p.round}$ \label{line:broadcastProposal}
            \Else
                \State \algschedule \Call{OnTimeoutPropose}{$round_p$} \algafter $\timeoutPropose(round_p)$ \label{line:scheduleTimeoutPropose}
            \EndIf
        \EndFunction
		\HalfStatex

        \Upon{$\li{\Proposal, round_p, v, -1}$ \algfrom $\coord(round_p)$ \algwhile $step_p = \propose$}\label{line:recvProposalFresh}
            \If{$\valid(v) \wedge (valid_p.round = -1 \vee valid_p.value = \id(v))$} \label{line:checkValidUnlocked}
                \State \algbroadcast $\li{\Precommit, round_p, \id(v)}$ \label{line:broadcastPrecommitFresh}
            \Else
                \State \algbroadcast $\li{\Precommit, round_p, \nil}$ \label{line:broadcastPrecommitFreshNil}
            \EndIf
            \State $step_p \assign \precommit$ \label{line:setStepPrecommitFresh}
        \EndUpon
        \HalfStatex

        \Upon{$\li{\Proposal, round_p, \id(v), vr}$ \algfrom $\coord(round_p)$ \algwith $0 \leq vr < round_p$ \algand \\ $2f+1$ $\li{\Precommit, vr, \id(v)}$ \algwhile $step_p = \propose$} \label{line:recvProposalReproposal}
            \If{$valid_p.round \leq vr \vee valid_p.value = \id(v)$} \label{line:checkValidReproposal}
                \If{$valid_p.round \leq vr$} \label{line:checkValidRoundLEvr}
                    \State $valid_p \assign (vr, \id(v))$ \label{line:updateLockReproposal}
                \EndIf
                \State \algbroadcast $\li{\Precommit, round_p, \id(v)}$ \label{line:broadcastPrecommitReproposal}
            \Else
                \State \algbroadcast $\li{\Precommit, round_p, \nil}$ \label{line:broadcastPrecommitReproposalNil}
            \EndIf
            \State $step_p \assign \precommit$ \label{line:setStepPrecommitReproposal}
        \EndUpon
        \HalfStatex

        \Upon{$2f+1$ $\li{\Precommit, round_p, \id(v)}$ \algwith $round_p > valid_p.round$} \label{line:uponObservation}
            \State $valid_p \assign (round_p, \id(v))$ \label{line:updateLockObservation}
        \EndUpon
        \HalfStatex

        \Upon{$n-f$ $\li{\Precommit, r, *}$ for the first time \algwith $r \geq round_p$}\label{line:uponPrecommitQuorum}
            \State \algschedule \Call{OnTimeoutPrecommit}{$r$} \algafter $\timeoutPrecommit(r)$ \label{line:scheduleTimeoutPrecommit}
        \EndUpon
		\HalfStatex

        \Upon{$\li{\Proposal, r, v, -1}$ \algfrom $\coord(r)$ \algand $n-f$ $\li{\Precommit, r', \id(v)}$} \label{line:uponDecide}
            \State $decision_p \assign v$ \label{line:decide}
        \EndUpon
		\HalfStatex

        \Function{WaitForValid}{$timeout$} \label{line:waitForValid} \Comment{Run by the proposer of rounds $>0$}
            \While{$(valid_p.round < round_p - 1)~\wedge~!timeout.elapsed()$}
                \If{$2f+1$ $\li{\Precommit, r, \id(v)}$ \algwith $r > valid_p.round$}
                    \State $valid_p \assign (r, \id(v))$ \label{line:updateLockWait}
                \EndIf
            \EndWhile
        \EndFunction
        \HalfStatex

        \Function{OnTimeoutPropose}{$round$} \label{line:onTimeoutPropose}
            \If{$round = round_p \wedge step_p = \propose$} \label{line:checkRoundStepPropose}
                \State \algbroadcast $\li{\Precommit, round_p, \nil}$ \label{line:broadcastPrecommitTimeoutNil}
                \State $step_p \assign \precommit$ \label{line:setStepPrecommitTimeout}
            \EndIf
        \EndFunction
        \HalfStatex

        \Function{OnTimeoutPrecommit}{$round$} \label{line:onTimeoutPrecommit}
            \If{$round \geq round_p \wedge decision_p = \nil$} \label{line:checkRoundTimeoutPrecommit}
                \State \Call{StartRound}{$round + 1$} \label{line:nextRound}
            \EndIf
        \EndFunction
    \end{algorithmic}
\end{algorithm}

\myparagraph{Ordinary operation:}
Processes start a height from round $r = 0$,
with empty $valid = (-1, \nil)$.
Process $\coord(r)$ calls $\getValue()$ to obtain a value $v$ to propose,
then broadcasts $\li{\Proposal, r, v, -1}$ (line~\ref{line:broadcastProposal}), where a broadcast also delivers the message to the sender.
Upon receiving this proposal, a process  applies the 
\emph{\freshProposalRule} (line~\ref{line:recvProposalFresh}): provided $\valid(v)$ holds, and since $valid$ is empty,
it broadcasts $\li{\Precommit, r, \id(v)}$, where $\id(v)$ is a short unique identifier of $v$.
Once the proposal and $n-f$ matching precommits for $\id(v)$ have been received, the \emph{\decisionRule} (line~\ref{line:uponDecide}) fires and the process decides $v$.

\myparagraph{Faulty proposer:}
As in Tendermint, non-proposer processes schedule a timeout when starting round $r$ (line~\ref{line:scheduleTimeoutPropose}).
If it expires before a proposal is received, the process
broadcasts $\li{\Precommit, r, nil}$.
A faulty proposer can also equivocate, proposing different values to different processes; 
this may produce conflicting votes.
To handle both situations, whenever a process receives $n-f$ precommits for any value, the \emph{\precommitQuorumRule} (line~\ref{line:uponPrecommitQuorum}) schedules a timeout.
If it expires before a decision is reached, the process starts the next round $r+1$.

\myparagraph{Valid values:}
A value is valid in a round if no other value can be decided in that round.
In Tendermint, a value is valid if it receives $n-f$ prevotes;
in \ftm, if it receives $2f+1$ precommits from the same round,
as proven in \Cref{lemma:value-lock}.
When this happens for $\id(v)$ in the process's \emph{current round},
the \emph{\observationRule} (line~\ref{line:uponObservation})
sets $valid = (round_p, \id(v))$.
It can also happen for a previous round $vr < round_p$,
as part of the \emph{\reProposalRule} (line~\ref{line:updateLockReproposal}).
A process with a non-empty $valid$ value rejects, by default, any received proposal
for a different value, broadcasting a $\li{\Precommit, r, nil}$;
the exception to this rule is described next.

\myparagraph{Re-proposals:}
A proposer with a non-empty $valid$ value re-proposes it
by broadcasting $\li{\Proposal, r, \id(v), vr}$,
where $\id(v)$ is the valid value and $vr$ the associated valid round.
At a receiving process $q$, the \emph{\reProposalRule}
(line~\ref{line:recvProposalReproposal}) is activated
provided that the $2f+1$ precommits from round $vr$
supporting the re-proposal were received.
They can \emph{override} a conflicting $valid_q$
if $vr \geq valid_q.round$, in which case $q$ broadcasts $\li{\Precommit, r, \id(v)}$.

\myparagraph{Round skipping:}
In Tendermint, late processes jump to a higher round $r'$
after receiving $f+1$ messages from round $r'$.
In \ftm, for the sake of safety, the \observationRule needs to
capture valid values before a process moves to a higher round.
For this reason, the \precommitQuorumRule, with an $n-f$ quorum,
is the only allowed path to skip rounds.

\myparagraph{Remarks:}
\ftm makes some additional optimizations.
Re-proposals carry the value identifier $\id(v)$ rather than the full value $v$,
so only the original fresh proposal carries $v$.
This allows $valid$ to be updated from $2f+1$ precommits alone,
not requiring a proposal for $v$.
The $\valid()$ predicate is only evaluated in the \freshProposalRule;
\Cref{thm:validity} explains how validity is ensured in other rules.
The \textsc{WaitForValid()} method is introduced for the sake of liveness,
as detailed in \Cref{lemma:valid-round}.
Before proposing in a round $r > 0$, the proposer runs it for an amount of
time so as to learn a valid value from round $r-1$, if any.

\section{Correctness Proofs} \label{sec:proofs}

In the protocol, precommit messages and the $valid$ state carry value identifiers $\id(v)$ rather than full values $v$.
We assume $\id$ is collision-resistant, so $\id(v) = \id(w)$ implies $v = w$.

\begin{lemma} \label{lemma:small-intersection}
    Any two sets of $n-f$ and $2f+1$ processes share at least one correct process.
\end{lemma}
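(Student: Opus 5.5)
The plan is a direct counting argument via inclusion--exclusion over the set of all $n$ processes. Let $A$ be a set of $n-f$ processes and $B$ a set of $2f+1$ processes. Since $A \cup B$ contains at most $n$ processes, we have
\[
|A \cap B| \;=\; |A| + |B| - |A \cup B| \;\geq\; (n-f) + (2f+1) - n \;=\; f+1 .
\]
The intersection therefore contains at least $f+1$ processes. At most $f$ processes are Byzantine, so at least one process in $A \cap B$ is correct.

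Notably, this step only uses $n \geq |A \cup B|$ and does not need the assumption $n > 5f$. The model's resilience bound matters only to guarantee that sets of size $n-f$ and $2f+1$ can coexist among $n$ processes, i.e., $n \geq 2f+1$, which $n > 5f$ certainly implies. I would remark on this so the lemma can be reused freely in the agreement proofs, for instance to show that a value with $2f+1$ precommits in a round meets every $n-f$ decision quorum in a correct process.

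There is no real obstacle here. The only care needed is to read ``sets of processes'' as sets of distinct processes, i.e., distinct signers, so that cardinalities are well defined. This is justified by the PKI assumption of \Cref{sec:model}: messages are signed, so a quorum of $k$ precommits identifies $k$ distinct processes.
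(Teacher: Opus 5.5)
Your proof is correct and matches the paper's argument: inclusion--exclusion bounds the intersection below by $(n-f)+(2f+1)-n = f+1$ processes, and since at most $f$ are faulty, at least one of them is correct. Your remarks that the bound needs no use of $n > 5f$, and that sets should be read as sets of distinct signers, are accurate.
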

\begin{proof}
    The two sets intersect in at least $(n - f) + (2f+1) - n = f+1$ processes.
    Since at most $f$ processes are faulty, at least one process in the intersection is correct.
\end{proof}

\begin{lemma} \label{lemma:value-lock}
    If $n-f$ processes precommit for $\id(v)$ in round $r$ ($v$ can be decided in round $r$),
    then no correct process sets $valid = (r', \id(w))$ for any $\id(w) \neq \id(v)$ and $r' \geq r$.
\end{lemma}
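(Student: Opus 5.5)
Let $Q$ be the set of $n-f$ processes that precommit $\id(v)$ in round $r$; at least $n-2f$ of them are correct. I would argue by strong induction on $r' \geq r$ that no correct process ever sets $valid = (r', \id(w))$ with $\id(w) \neq \id(v)$. A correct process can set $valid$ only at three places: the \observationRule (line~\ref{line:updateLockObservation}), the \reProposalRule (line~\ref{line:updateLockReproposal}), and \textsc{WaitForValid} (line~\ref{line:updateLockWait}). In every case the new value $(r', \id(w))$ is supported by $2f+1$ precommits $\li{\Precommit, r', \id(w)}$ from round $r'$: in the observation and wait rules by construction, and in the re-proposal rule because the new value is $(vr, \id(w))$ with $2f+1$ precommits from round $vr = r'$. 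So it suffices to show that for every $r' \geq r$ and every $\id(w) \neq \id(v)$, fewer than $2f+1$ processes precommit $\id(w)$ in round $r'$.

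\emph{Base case $r' = r$.} By \Cref{lemma:small-intersection}, a set of $2f+1$ precommitters for $\id(w)$ in round $r$ shares a correct process with $Q$. That process would have sent two different precommits in round $r$, which is impossible: a correct process precommits at most once per round, since each precommit is followed by $step_p \assign \precommit$ and every precommit rule requires $step_p = \propose$ in the current round.

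\emph{Inductive step $r' > r$.} Assume the claim for all rounds in $[r, r')$. First, every correct $q \in Q$ has $valid_q = (r_q, \id(v))$ with $r_q \geq r$ at the moment it sends its round-$r$ precommit, or at the latest once it sees the round-$r$ quorum. Here is where I expect the main obstacle: precommitting via the \freshProposalRule does not itself update $valid$. I would argue that $q$ updates $valid$ through the \observationRule once it receives $2f+1$ of the $n-f$ precommits while still in round $r$. The only way to leave round $r$ is a timeout of the \precommitQuorumRule, and the round-skipping restriction was designed precisely to ensure this; gossip also guarantees delivery of the precommits. If that cannot be pinned down, the fallback is to treat a correct precommit for $\id(v)$ in round $r$ as an implicit lock on $v$ at round $r$ and to carry this invariant through the induction.

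Next I would show that $valid_q.value$ stays $\id(v)$ with round in $[r, r')$ before round $r'$. Any later update of $valid_q$ is to some round $r'' \in [r, r')$ supported by $2f+1$ precommits, so by the induction hypothesis its value is $\id(v)$. Now consider $q$'s precommit in round $r'$. Under the \freshProposalRule with $w \neq v$, the test $valid_q.round = -1 \vee valid_q.value = \id(w)$ fails, so $q$ precommits $\nil$. Under the \reProposalRule for $\id(w)$ with $vr < r'$, the test $valid_q.round \leq vr \vee valid_q.value = \id(w)$ requires $valid_q.round \leq vr$. Two sub-cases arise:
\begin{itemize}
\item if $vr \geq r$, the $2f+1$ round-$vr$ precommits for $\id(w)$ contradict the induction hypothesis;
\item if $vr < r$, we get $valid_q.round \leq vr < r \leq r_q$, contradicting $valid_q.round \geq r$.
\end{itemize}
Hence every correct $q \in Q$ precommits $\id(v)$ or $\nil$ in round $r'$, and at most $n - (n-2f) = 2f$ processes can precommit $\id(w)$. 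That is fewer than $2f+1$, which closes the induction and proves the lemma.
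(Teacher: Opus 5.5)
The gap is the step you flag yourself, and the rest of your induction depends on it. You never show that each correct $q\in Q$ has $valid_q.value=\id(v)$ and $valid_q.round\ge r$ before it precommits in a later round. The fallback does not help. The \freshProposalRule leaves $valid$ untouched, and later precommits depend only on $valid$, so an ``implicit lock'' is exactly the claim that needs proof.

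The paper fills this step with a count. A correct process that moves from round $r$ to round $r+1$ does so on $n-f$ round-$r$ precommits it holds while in round $r$. At most $f$ of their senders are faulty and at most $f$ are correct processes outside $Q$. So at least $n-3f\ge 2f+1$ are for $\id(v)$, and the \observationRule sets $valid=(r,\id(v))$. This is the only place $n>5f$ enters. Your write-up never uses $n>5f$: every inequality you state already holds for $n=3f+1$, which shows the essential ingredient is missing.

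Your justification (``the only way to leave round $r$ is the \precommitQuorumRule'') also does not give $q$ a round-$r$ quorum. That rule can fire for some $r''>r$, so $q$ may jump from $r$ to $r''+1$ holding few round-$r$ precommits. Likewise, a process can enter $r+1$ from below $r$ without ever being in round $r$, and the \observationRule only inspects $round_p$. The paper's proof glosses over the same cases: its set $C$ of processes joining round $r+1$ is treated as if all of them had been in round $r$. For the pseudo-code as written, these cases appear fatal.

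Here is a schedule I checked by hand. Take $n=6$, $f=1$, correct $p_1,\dots,p_5$, Byzantine $b$, $\coord(2)=p_1$, $\coord(3)=b$.
\begin{enumerate}
\item Round $0$: no value gets $2f+1$ precommits. $p_1,\dots,p_4$ enter round $1$, while $p_5$ stays in round $0$ because the round-$0$ precommits to it are delayed.
\item Round $1$: $\coord(1)\in\{p_2,p_3,p_4\}$ freshly proposes $v$. $p_1,\dots,p_4$ precommit $\id(v)$, and $b$ sends $\li{\Precommit,1,\id(v)}$ only to $p_1$ and $\nil$ to the others. So $n-f$ processes precommit $\id(v)$.
\item End of round $1$: $p_2,p_3,p_4$ see four $\id(v)$ precommits, set $valid=(1,\id(v))$ and enter round $2$. $p_5$, still in round $0$, receives the five round-$1$ precommits and jumps to round $2$ with $valid=(-1,\nil)$. $p_1$ holds only its own round-$1$ precommit and $b$'s.
\item Round $2$: there is no proposal, since $p_1$ never enters round $2$. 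So $p_2,\dots,p_5,b$ precommit $\nil$. On these five precommits, $p_1$ jumps from round $1$ to round $3$, still with $valid=(-1,\nil)$.
\item Round $3$: $b$ freshly proposes $w$, and $p_1$, $p_5$, $b$ precommit $\id(w)$. These are $2f+1$ precommits, so the \observationRule sets $valid=(3,\id(w))$ at correct processes, contradicting the statement.
\end{enumerate}
Continuing the schedule, round $4$ decides $w$, while $p_1$ can decide $v$ once the delayed round-$1$ precommits arrive. So this step cannot be closed by a sharper argument alone. The protocol would have to be strengthened so that valid values are captured across skipped rounds.
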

\begin{proof}
    \iflong
    We prove the claim by induction on $r'$, where $r'$ is the highest round that any correct process has entered.
    Since no correct process has entered any round $> r'$, at most $f$ precommits from faulty processes exist for rounds $> r'$.
    Updates to $valid$ in \ftm require $2f+1$ precommits in a round, so we can safely disregard rounds $> r'$ and prove the lemma by showing that no $\id(w) \neq \id(v)$ receives $2f+1$ precommits in any round $r' \geq r$.

    \emph{Base case 1 ($r' = r$):}
    By \Cref{lemma:small-intersection}, the $n-f$ processes that precommit for $\id(v)$ and the $2f+1$ processes that precommit for $\id(w)$ share at least one correct process.
    Correct processes only send one precommit per round, so $\id(w) = \id(v)$.

    \emph{Base case 2 ($r' = r+1$):}
    Denote by $C$ the set of correct processes that join round $r+1$.
    While in round $r$, processes in $C$ receive $n-f$ precommits for round $r$ to trigger the \precommitQuorumRule. %
    Since at most $f$ correct processes did not precommit for $\id(v)$ in round $r$, and at most $f$ processes are faulty, at least $n-3f > 2f$ precommits received by every process in $C$ are for $\id(v)$.
    When receiving those messages, every process $p$ in $C$ has $valid_p.round < round_p$, as no $\id(w) \neq \id(v)$ receives $2f+1$ precommits in round $r$ (Base case 1).
    Thus, by the observation rule%
    , all processes in $C$ set $valid = (r, \id(v))$.  
    At most $f$ correct processes are not in $C$ and there are at most $f$ faulty processes, so at most $2f < 2f+1$ processes can precommit for $\id(w) \neq \id(v)$ in round $r+1$.

    \emph{Inductive step ($r' > r+1$):}
    Assume for all rounds $r''$ with $r+1 \leq r'' < r'$, no $\id(w) \neq \id(v)$ received $2f+1$ precommits in round $r''$.
    We prove the lemma for round $r'$.

    By the induction hypothesis, no process in $C$ updated its $valid$ to $\id(w) \neq \id(v)$ before round $r'$.
    Therefore, processes in $C$ that precommit in round $r'$ still have $valid.value = \id(v)$ and do not precommit for $\id(w) \neq \id(v)$.
    Again, there are at most $f$ correct processes not in $C$ and $f$ faulty processes that can precommit for $\id(w)$, totaling $2f < 2f+1$.

    \else
    For the proof, please refer to the extended version~\cite{extended}.
    \fi
\end{proof}

\begin{theorem} \label{thm:safety}
    \ftm satisfies Agreement.
\end{theorem}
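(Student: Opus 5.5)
The plan is to reduce Agreement to \Cref{lemma:value-lock}, combined with a bridge lemma saying that correct processes only precommit for values that are currently valid. Suppose two correct processes decide $v$ and $w$ via the \decisionRule.

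Deciding $v$ requires $n-f$ $\li{\Precommit, r, \id(v)}$ for some round $r$. Deciding $w$ likewise requires $n-f$ $\li{\Precommit, r', \id(w)}$ for some round $r'$. Without loss of generality take $r \le r'$.

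\textbf{Case $r = r'$.} Two $n-f$ quorums intersect in at least $n-2f \ge 2f+1$ processes, hence in at least one correct process. This also follows from \Cref{lemma:small-intersection}, since $n-f \ge 2f+1$. A correct process sends at most one precommit per round, so $\id(v)=\id(w)$. Collision resistance then gives $v=w$.

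\textbf{Case $r < r'$.} Among the $n-f$ precommits for $\id(w)$ in round $r'$, at least $n-2f \ge 3f+1$ come from correct processes. The plan is to show that a correct process which precommits $\id(w)$ in round $r'$ must hold, or have just set, $valid$ for $\id(w)$ with some round $\ge r$, or else be unconstrained. Then \Cref{lemma:value-lock} yields the contradiction. I would examine the rules by which a correct process precommits a non-nil value in round $r'$.

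\emph{Re-proposal rule with $vr \ge r$.} The $2f+1$ precommits for $\id(w)$ in round $vr$ mean some correct process could set $valid=(vr,\id(w))$ via the observation/re-proposal path. The proof of \Cref{lemma:value-lock} actually shows the stronger claim that no $\id(w)\neq\id(v)$ gathers $2f+1$ precommits in any round $\ge r$. That claim directly forbids this case.

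\emph{Re-proposal rule with $vr < r$.} Here the process must have $valid.round \le vr < r$ or $valid.value=\id(w)$.

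\emph{Fresh-proposal rule.} Here the process must have $valid.round=-1$ or $valid.value=\id(w)$.

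The main obstacle is the last two cases. To rule them out I would reuse the structure of the proof of \Cref{lemma:value-lock}: every correct process in the set $C$ that entered round $r+1$ has $valid=(r'',\id(v))$ with $r'' \ge r$. By the lemma itself, its $valid$ never switches to a different identifier at a round $\ge r$. So it neither precommits $\id(w)$ via a fresh proposal nor via a re-proposal with $vr<r$.

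Hence only correct processes outside $C$, plus faulty ones, can precommit $\id(w)$ in round $r'$. That is at most $f+f=2f<n-f$ processes, contradicting the $n-f$ quorum for $w$.

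A simpler alternative is to cite the strengthened statement from the lemma's proof directly: no $\id(w)\ne\id(v)$ obtains $2f+1$ precommits in any round $\ge r$. Since $n-f \ge 2f+1$, this immediately precludes the $n-f$ precommits for $w$ in round $r'$. I would present this as the main line and spell out the reliance on the inductive claim. Either way $v=w$, giving Agreement.
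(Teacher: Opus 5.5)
Your main line is the paper's proof: take $r \le r'$, invoke the claim established inside the proof of \Cref{lemma:value-lock} (not its literal statement about $valid$, as you rightly note) that no identifier other than $\id(v)$ gathers $2f+1$ precommits in any round $\ge r$, use $n-f \ge 2f+1$, and conclude $v=w$ by collision resistance. Your preliminary case analysis only re-runs that lemma's proof and can be dropped; note, though, that it and the paper's argument both rest on the lemma's base-case assumption that every correct process entering round $r+1$ has already applied the \observationRule to round $r$, and the pseudocode does not guarantee this for a non-proposer that jumps from a round below $r$ straight to $r+1$ via the \precommitQuorumRule, since the \observationRule only counts precommits of its current round $round_p$.
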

\begin{proof}
    Suppose correct processes $p$ and $q$ decide $v$ and $w$, respectively.
    By the \decisionRule%
    , $p$ (resp. $q$) observed $n-f$ precommits for $\id(v)$ (resp. $\id(w)$) in some round $r_p$ (resp. $r_q$).
    Without loss of generality, $r_p \leq r_q$.
    Since $\id(v)$ received $n-f$ precommits in round $r_p$, by \Cref{lemma:value-lock},
    no value other than $\id(v)$ can receive $2f+1$ precommits in any round $r \geq r_p$.
    This includes round $r_q$, where $q$ observed $n-f > 2f+1$ precommits for $id(w)$.
    Consequently, $\id(w) = \id(v)$.
    By collision-resistance of $\id$, $w = v$.
\end{proof}

\begin{theorem} \label{thm:validity}
    \ftm satisfies Validity.
\end{theorem}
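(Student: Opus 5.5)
The decision rule (line~\ref{line:uponDecide}) is the only place where $decision_p$ is assigned. It fires only upon receiving a fresh proposal $\li{\Proposal, r, v, -1}$ from $\coord(r)$ together with $n-f$ precommits $\li{\Precommit, r', \id(v)}$. The plan is therefore to show that whenever $n-f$ precommits for $\id(v)$ exist in some round, $v$ satisfies $\valid(v)$. The decided value is the full value $v$ carried by a fresh proposal, and $\valid()$ is evaluated only in the \freshProposalRule. So I will trace every correct precommit for a non-nil identifier back to a correct process that evaluated $\valid(v)$ on that same $v$.

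The key invariant, proved by induction on rounds, is the following: if a correct process broadcasts $\li{\Precommit, r, \id(v)}$, or sets $valid = (r, \id(v))$, then some correct process has evaluated $\valid(v)$ to true. The cases follow the lines of the protocol that produce non-nil precommits or $valid$ updates.

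\begin{enumerate}
\item \textbf{Fresh-proposal precommit (line~\ref{line:broadcastPrecommitFresh}).} The sender itself checked $\valid(v)$, so the invariant holds directly.
\item \textbf{Re-proposal precommit (line~\ref{line:broadcastPrecommitReproposal}).} The process either holds $valid = (\cdot, \id(v))$ already, or has received $2f+1$ precommits for $\id(v)$ in round $vr < r$.
\item \textbf{Observation rule and \textsc{WaitForValid}.} The update is triggered by $2f+1$ precommits for $\id(v)$ in some round.
\end{enumerate}

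In cases 2 and 3, any $2f+1$ precommits include at least $f+1 \geq 1$ from correct processes. Those correct precommits were sent in a round no larger than the current one: in a strictly earlier round $vr$ for the re-proposal rule, and possibly the same round for the observation rule. Either way, the inductive hypothesis (or case 1 in the same round) supplies a correct process that validated $v$. Since $\valid$ is deterministic, $\valid(v)$ then holds.

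For the decision itself, $n-f \geq 2f+1$ precommits for $\id(v)$ contain at least one correct precommit, so by the invariant $\valid(w)$ holds for some $w$ with $\id(w) = \id(v)$. By collision resistance, $w = v$, which is the value delivered in the fresh proposal.

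The main obstacle is making the induction well-founded. The observation rule and \textsc{WaitForValid} read precommits of the same round, so the order should be by the time at which events occur, not merely by round number. I plan to argue by contradiction: take the earliest event at which a correct process precommits or sets $valid$ for an $\id(v)$ with $\neg\valid(v)$. Every non-fresh trigger depends on a strictly earlier correct precommit for $\id(v)$, which contradicts minimality. A fresh-proposal trigger contradicts the check at line~\ref{line:checkValidUnlocked}.
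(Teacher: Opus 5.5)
Your proof is correct and is essentially the paper's argument. The earliest correct precommit for $\id(v)$ cannot come from the \reProposalRule, whose guard demands $2f+1$ precommits for $\id(v)$ from a round $vr < round_p$ (hence at least one correct one). So it came from the \freshProposalRule, which checked $\valid(v)$.

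The paper orders by round and tracks only precommits. That is enough because the guard is required on both branches of the re-proposal rule, so your ``either/or'' in case~2 is really a ``both''. Including $valid$ updates and ordering by time is therefore sound but unnecessary. Your explicit collision-resistance step linking the decided $v$ to the validated value is a detail the paper leaves implicit.
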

    
\begin{proof}
    The \freshProposalRule %
    requires the application-specific $\valid(v)$ predicate to attest the validity of $v$ in order to produce a precommit for $\id(v)$.
    The \reProposalRule %
    and the \decisionRule %
    do not check
    for validity. They assume that $v$ is valid based on $2f+1$ previous precommits or $n-f$ precommits for $\id(v)$, respectively.

    Let $r$ be the earliest round in which any correct process precommits for $\id(v)$.
    Since the \reProposalRule requires $2f+1$ previous precommits for $\id(v)$
    and at most $f$ processes are faulty, it cannot be triggered.
    The precommits in round $r$ from correct processes were therefore broadcast by the \freshProposalRule, establishing $\valid(v)$.
\end{proof}

\begin{lemma} \label{lemma:valid-round}
Let $p = \coord(r)$ be a correct process, $r > 0$, and $vr$ be $valid_p.round$ when $p$ broadcasts its proposal.
If $p$ started round $r$ at time $t > GST + \timeoutPrecommit(r-1)$ and $\timeoutPrecommit(r) > 2\Delta$, then $vr$ is the maximum
$valid_q.round$ among every correct process $q$ that joins round $r$.
\end{lemma}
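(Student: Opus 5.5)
The plan is to bound from above the $valid_q.round$ of any correct $q$ joining round $r$, and then show that $p$ learns a valid value from that same round before it proposes.

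\textbf{Step 1: $valid_q.round \le r-1$.} Let $q$ be a correct process that joins round $r$. At the moment $q$ enters round $r$, and for as long as it could affect acceptance of $p$'s proposal, it has $valid_q.round \le r-1$. This holds because $valid$ is only updated in three places, each with a round bounded by the process's current round:
\begin{itemize}
\item the \observationRule, which uses the process's current round;
\item the \reProposalRule, which uses $vr < round_p$;
\item \textsc{WaitForValid}, which only $p$ runs.
\end{itemize}
For \textsc{WaitForValid}, I would argue that at most $f$ precommits exist for rounds $\ge r$ before correct processes precommit in round $r$, so the $2f+1$ threshold is not met there. So it suffices to show two things: if some correct $q$ has $valid_q.round = r-1$, then $p$ reaches $valid_p.round \ge r-1$ before proposing; and otherwise $vr$ is at least the maximum anyway.

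\textbf{Step 2: $p$ knows everything a lower-round $q$ knows.} Let $m$ be the maximum $valid_q.round$ over correct $q$ joining $r$. Any $valid_q = (m,\id(v))$ was set from $2f+1$ precommits for $\id(v)$ in round $m$, and these were received by the correct process $q$. By the gossip property, every correct process, $p$ included, receives those precommits by $\max\{t_q, GST\}+\Delta$, where $t_q$ is the time $q$ received them.

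For $m < r-1$, I want $p$ to have already absorbed them. Two facts should give this:
\begin{itemize}
\item \textsc{WaitForValid} and the \observationRule pick up any $2f+1$ precommit set with a round higher than $valid_p.round$.
\item By \Cref{lemma:value-lock}-style uniqueness (any two $2f+1$ precommit sets for the same round and conflicting values are impossible when a value is decidable, and in general $p$ just needs the round), $p$'s round is at least $m$.
\end{itemize}
The subtle point is timing. $q$ joined round $r$ after its \precommitQuorumRule fired for some round $\ge r-1$, and $q$ must have received the round-$m$ precommits before setting $valid$. So $t_q$ precedes $q$'s entry into $r$, and the loop condition of \textsc{WaitForValid} keeps $p$ waiting while $valid_p.round < r-1$.

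\textbf{Step 3: timing for the case $m = r-1$ (main obstacle).} Here $q$ observed $2f+1$ round-$(r-1)$ precommits for $\id(v)$ at some time $t_q$. I expect this case to be the hardest, since it requires showing $t_q$ is early enough relative to $p$'s wait. I would argue as follows.
\begin{itemize}
\item $p$ started round $r$ at $t > GST + \timeoutPrecommit(r-1)$, through the \precommitQuorumRule for round $r-1$ plus the timeout. So by time $t - \timeoutPrecommit(r-1) > GST$, $p$ had received $n-f$ round-$(r-1)$ precommits.
\item By gossip, all correct processes have those messages by time $t - \timeoutPrecommit(r-1) + \Delta$.
\item Since correct processes precommit in round $r-1$ at most once, any $2f+1$ round-$(r-1)$ precommits seen by $q$ include correct senders, whose messages reach $p$ within $\Delta$ of being sent after $GST$.
\end{itemize}
The difficulty is to bound when those correct precommits were sent relative to $t$. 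I would use that the correct senders entered round $r-1$ and precommitted before $q$ could leave round $r-1$, and that $q$'s own timeout expiry to join $r$ is at most $\Delta$ later than $p$'s, by gossiping the $n-f$ quorum. Hence all relevant precommits arrive at $p$ within $2\Delta$ of $t$. Since \textsc{WaitForValid} runs for up to $\timeoutPrecommit(r) > 2\Delta$ and exits only when $valid_p.round \ge r-1$, $p$ adopts round $r-1$ before proposing, giving $vr = r-1 = m$.

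Finally, $vr \le m$ holds trivially: $p$ itself is a correct process joining $r$, so $vr$ is one of the values over which $m$ is the maximum. This yields equality.
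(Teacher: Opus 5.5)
Your proposal is correct and follows the paper's argument. The chain is: $valid_q.round \le r-1$; every correct $q$ enters round $r$ by $t+\Delta$, because $p$ received the $n-f$ round-$(r-1)$ precommits after $GST$; so the $2f+1$ precommits behind any $valid_q$, which $q$ received before entering $r$, reach $p$ by $t+2\Delta$ via the receive half of gossip, within the $>2\Delta$ that \textsc{WaitForValid} waits unless it already reached round $r-1$. The paper runs this timing argument once for every value of the maximum, so your case split is unnecessary, and your Step 2 is actually closed by the Step 3 bound, not by the appeal to \Cref{lemma:value-lock}, which plays no role; also, the correct-senders bullet alone would not deliver the up to $f$ Byzantine precommits that $p$ also needs, so the receive half of gossip from Step 2 is what does the work.
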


\begin{proof}
\iflong
    Since $valid_q.round$ is updated to the current round via the \observationRule %
    or to a previous round via the \reProposalRule, %
    the maximum value it can reach in a process $q$ is its current round.
    Since this happens before $q$ enters round $r$, $vr \leq r-1$.
    
    Correct processes start round $r > 0$ upon the expiration of
    $\timeoutPrecommit(r-1)$.
    Since $p$ scheduled it after $GST$, from the gossip communication property, every correct process $q$ will schedule the timeout
    (line~\ref{line:scheduleTimeoutPrecommit}) at most $\Delta$ after $p$, then enter round $r$ by $t + \Delta$.
    
    The last opportunity for a correct process $q$ to update $valid_q.round$ before entering round $r$ is immediately before $\timeoutPrecommit(r-1)$ expires, namely, by $t + \Delta$.
    The same messages that lead $q$ to update its $valid$ are thus 
    received by $p$ by $t + 2\Delta$.
    Since \textsc{WaitForValid()} is run for at least $2\Delta$,
    the proposer $p$ updates $vr$ to the maximum of $valid_q.round$ among every correct process $q$, before broadcasting its proposal for round $r$.
\else
    For the proof, please refer to the extended version~\cite{extended}.
\fi
\end{proof}

\begin{lemma} \label{lemma:decision-in-good-round}
If $r > 0$ is a round such that:
(1) the first correct process enters round $r$ at time $t > GST + \timeoutPrecommit(r-1)$,
(2) $p = \coord(r)$ is a correct process, 
(3) $\timeoutPropose(r) > 2\Delta + \timeoutPrecommit(r)$,
and (4) $\timeoutPrecommit(r) > 2\Delta$,
then all correct processes decide in round $r$.
\end{lemma}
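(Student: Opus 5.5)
We follow all correct processes through round $r$. The argument has four parts: bound the entry times, show every correct process receives the proposal while still in the propose step, show every correct process accepts it, and conclude that the decision rule fires everywhere.

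\emph{Timing.} Let $t$ be the time the first correct process enters round $r$. It does so when its $\timeoutPrecommit(r-1)$ expires. That timeout was scheduled at time $t - \timeoutPrecommit(r-1) > GST$, upon receiving $n-f$ precommits of round $r-1$. By gossip communication, every correct process receives those same $n-f$ precommits by $t - \timeoutPrecommit(r-1) + \Delta$. Hence every correct process enters round $r$ by $t+\Delta$, unless it has already decided, in which case it needs nothing further.

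\emph{Proposal delivery.} The proposer $p$ enters round $r$ at some time $t_p \in [t, t+\Delta]$. It runs \textsc{WaitForValid} for at most $\timeoutPrecommit(r)$ and then broadcasts. Every correct process therefore receives the proposal by $t+2\Delta+\timeoutPrecommit(r)$. A process $q$ enters round $r$ at time $t_q \ge t$, so by (3) its $\timeoutPropose(r)$ expires after $t+2\Delta+\timeoutPrecommit(r)$. Thus $q$ is still in step $\propose$ when the proposal arrives.

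Two further facts are needed. First, $q$ cannot have left round $r$ earlier through the \precommitQuorumRule. That rule needs $n-f$ precommits of round $r$, and such precommits require correct processes to have precommitted in round $r$. Before the proposal arrives, a correct process precommits only on timeout, which happens later than the arrival time just computed. Second, $q$ also receives the $2f+1$ precommits from round $vr$ that justify a re-proposal. These are the messages $p$ used, and gossip forwards them.

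\emph{Acceptance — the main obstacle.} Condition (4) together with the bound $t > GST + \timeoutPrecommit(r-1)$ gives the hypotheses of \Cref{lemma:valid-round}, applied to $t_p \ge t$. Hence $vr = valid_p.round$ is the maximum of $valid_q.round$ over all correct processes $q$ in round $r$. We split into two cases.
\begin{itemize}
\item If $vr = -1$, every correct process has $valid_q.round = -1$. The \freshProposalRule then makes $q$ precommit $\id(v)$, since $v = \getValue()$ at a correct proposer satisfies $\valid(v)$.
\item If $vr \ge 0$, then $valid_q.round \le vr$ for every correct $q$. The \reProposalRule then makes $q$ precommit $\id(v)$.
\end{itemize}
The delicate point is that $valid_q$ could still change after $p$'s snapshot, while $q$ is waiting in round $r$. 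The observation rule needs $2f+1$ precommits of the current round $r$, and those can only be for $\id(v)$, so it never raises $valid_q.round$ above $vr$ for a different value. I would argue this explicitly, since \Cref{lemma:valid-round} only speaks about the moment processes join round $r$.

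\emph{Decision.} At least $n-f$ correct processes precommit $\id(v)$ in round $r$, and by gossip all correct processes receive these precommits within a further $\Delta$.

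For a re-proposal, the \decisionRule requires the original fresh proposal $\li{\Proposal, r', v, -1}$ for $v$. Validity (\Cref{thm:validity}) shows a correct process precommitted $v$ via a fresh proposal, so that proposal was received by some correct process. Gossip then delivers it to all correct processes.

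Finally, no correct process leaves round $r$ before deciding. The \precommitQuorumRule timeout of length $\timeoutPrecommit(r)$ starts only upon $n-f$ precommits of round $r$. Those same precommits, together with the fresh proposal, reach every process within $\Delta < \timeoutPrecommit(r)$, so the decision rule fires first. Hence all correct processes decide in round $r$.
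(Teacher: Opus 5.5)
Your outline matches the paper's proof: everyone enters round $r$ by $t+\Delta$, \Cref{lemma:valid-round} is applied at $p$, the proposal is delivered by $t+2\Delta+\timeoutPrecommit(r)$ using (3), every correct process accepts, and the decision follows using (4). Several of your additions fill holes the paper's proof leaves open: updates to $valid_q$ after $p$'s snapshot, delivery of the round-$vr$ justification, and especially the point that on a re-proposal the \decisionRule still needs the original $\li{\Proposal, r_0, v, -1}$.

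The problem is your final step. The \precommitQuorumRule at $q$ fires on $n-f$ round-$r$ precommits for \emph{any} value, and up to $f$ of them may be Byzantine (say, for $\nil$). Such a quorum may contain only $n-2f$ precommits for $\id(v)$, so ``those same precommits'' need not enable the \decisionRule. The missing correct precommits come from processes that may receive $p$'s proposal (and its justification) almost $\Delta$ after the first correct process. Their precommits then need almost another $\Delta$ to reach $q$. The right bound is therefore $2\Delta$, which is exactly why (4) reads $\timeoutPrecommit(r) > 2\Delta$. Your argument only invokes $\Delta < \timeoutPrecommit(r)$, and under that weaker hypothesis your sub-claim that no correct process leaves round $r$ before deciding is false. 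Take $n=6$, $f=1$: four correct processes and the Byzantine one precommit at once, and $q$ starts its timeout. The fifth correct process gets the proposal just under $\Delta$ later, so its precommit reaches $q$ just under $2\Delta$ later. If $\Delta<\timeoutPrecommit(r)<2\Delta$, $q$ moves to round $r+1$ first.

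The repair is the paper's $2\Delta$ argument. Let $t_b$ be the time $p$ broadcasts (and itself receives) its proposal. Every $\timeoutPropose(r)$ expires after $t_b+\Delta$, so no correct process precommits in round $r$ before $t_b$. Hence no correct process starts $\timeoutPrecommit(r)$ before $t_b$. Since $t_b > GST$, every correct process holds the proposal by $t_b+\Delta$, together with the $2f+1$ round-$vr$ precommits in the re-proposal case, and precommits $\id(v)$. All these precommits reach every correct process by $t_b+2\Delta < t_b+\timeoutPrecommit(r)$.

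The same reasoning is what shows $q$ cannot leave round $r$ before receiving the proposal. Your justification there, ``before the proposal arrives, a correct process precommits only on timeout,'' does not suffice, since other correct processes may precommit on the proposal before $q$ receives it.

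Finally, for the fresh proposal you need a time bound, not just the existence given by \Cref{thm:validity}. Trace the justifying precommits back from round $vr$: each correct precommit issued via the \reProposalRule was itself preceded by receipt of $2f+1$ precommits from a lower round. This shows some correct process received $\li{\Proposal, r_0, v, -1}$ before $t_b$, so gossip delivers it to every correct process by $t_b+\Delta$.
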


\begin{proof}
\iflong
    By the gossip communication property, since the system is past $GST$, every correct process enters round $r$ by $t + \Delta$, including the proposer $p$.
    So \Cref{lemma:valid-round} applies and the round $vr$ carried in $p$'s proposal is the maximum $valid_q.round$ across all correct processes $q$.
    
    Proposer $p$ broadcasts its proposal for $v$, which reaches every correct process by $t + 2\Delta + \timeoutPrecommit(r)$.
    Condition~(3) ensures $\timeoutPropose(r)$ has not elapsed, so every correct process accepts $p$'s proposal and broadcasts a precommit for $\id(v)$.
    The time from when the first correct process precommits until all correct processes have received all $n-f$ correct precommits is no longer than $2\Delta$ ($< \timeoutPrecommit(r)$ by condition (4)).
    This is because, by the gossip communication property, the proposal arrives within $\Delta$ to all correct processes and every precommit between correct processes is received by all others within an additional $\Delta$.
    It follows that every correct process decides $v$ in round $r$.
\else
    For the proof, please refer to the extended version~\cite{extended}.
\fi
\end{proof}

\begin{theorem} \label{thm:termination}
\ftm satisfies Termination.
\end{theorem}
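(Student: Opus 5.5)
The plan is to reduce Termination to \Cref{lemma:decision-in-good-round}: show that, unless all correct processes have already decided, they keep advancing through rounds forever, and that eventually some round satisfies conditions (1)--(4). As in Tendermint, we assume the timeouts grow with the round number, e.g.\ $\timeoutPrecommit(r) = \timeoutPrecommit(0) + r\cdot\delta$ with a similar schedule for $\timeoutPropose$. Then there is a round $R$ after which (3) and (4) hold for every $r \geq R$. We also assume a proposer-selection function $\coord$ such that every process is proposer infinitely often, e.g.\ round robin. Since at most $f$ processes are Byzantine, infinitely many rounds have a correct proposer, which gives condition (2).

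The first step is progress. Suppose some correct process has not decided. We show that correct processes never get stuck in a round.

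\begin{itemize}
    \item After $GST$, every correct process in round $r$ in the propose step either receives the proposal or has $\timeoutPropose(r)$ expire. Either way it broadcasts a precommit for round $r$.
    \item Every correct process that enters round $r$ therefore precommits. By gossip, all correct processes then receive at least $n-f$ precommits for round $r$.
    \item The \precommitQuorumRule thus fires, schedules $\timeoutPrecommit(r)$, and moves the process to round $r+1$ if it has not decided.
    \item If some correct process decides, it has received the proposal and $n-f$ precommits. By the gossip property, every other correct process receives these same messages within $\Delta$ and decides as well.
\end{itemize}

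Hence either all correct processes decide, or they all enter every round $r$, at times that grow without bound.

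The second step is to pick the good round. Let $r^\ast \geq R$ be a round whose proposer is correct and such that the first correct process enters $r^\ast$ at a time $t > GST + \timeoutPrecommit(r^\ast-1)$. Such a round exists because entry times into rounds are unbounded while only finitely many rounds begin before $GST$. Then \Cref{lemma:decision-in-good-round} applies and all correct processes decide in round $r^\ast$.

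The main obstacle is condition (1) combined with round skipping. Correct processes can jump ahead through the \precommitQuorumRule, so I must argue that correct processes cannot race through rounds without spending real time. The rule needs $n-f$ precommits, of which at least $n-2f$ come from correct processes. A correct process only precommits for round $r$ after entering it, and it enters round $r$ only after $\timeoutPrecommit(r-1)$ expires. So the first correct entry into round $r$ comes at least $\timeoutPrecommit(r-1)$ after the first correct entry into round $r-1$.

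Summing these gaps shows that the entry time of round $r$ grows at least like $\sum_{k<r}\timeoutPrecommit(k)$. This gives condition (1) for all sufficiently large $r$. Among those rounds I choose one with a correct proposer to obtain $r^\ast$.
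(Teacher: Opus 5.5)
Your proposal is correct in substance and follows the paper's proof. Like the paper, you reduce to \Cref{lemma:decision-in-good-round}, gossip any single decision to all correct processes, and rely on proposer rotation and growing timeouts to meet the lemma's conditions; your observation that the first correct entry into round $r$ trails the first correct entry into round $r-1$ by at least $\timeoutPrecommit(r-1)$ makes rigorous what the paper only asserts, namely that some round after $GST$ satisfies condition~(1).

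Two small repairs are needed. First, because of the \precommitQuorumRule, correct processes need not enter every round, so your claim that they all do is false; argue progress instead for the highest round reached by any correct process, since the $n-f$ precommits that justified entering it are gossiped to all. Second, condition~(3) requires $\timeoutPropose(r) - \timeoutPrecommit(r)$ to grow without bound, because $\Delta$ is unknown, so $\timeoutPropose$ must grow strictly faster than $\timeoutPrecommit$ rather than on a ``similar'' schedule.
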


\begin{proof}
\Cref{lemma:decision-in-good-round} establishes sufficient conditions
under which a round $r$ yields a decision.
It remains to show that these conditions are met within a bounded
interval after $GST$.

If a correct process triggers the \decisionRule %
at time $t$, by the gossip communication property every correct process
receives the same messages  and also triggers it by time $\max\{t, GST\} + \Delta$.
Note that this rule is not restricted by the process's current round.

If no decision is reached prior to $GST$, \Cref{lemma:decision-in-good-round} requires 
(1) $GST$ to be reached before the first correct process to enter round $r$ schedules $\timeoutPrecommit(r-1)$, and
(2) a correct process $p$ to be $\coord(r)$.
Since processes advance past unsuccessful rounds
and the proposer role rotates among all processes,
after $GST$ there is guaranteed to be a round that satisfies the necessary conditions.

Finally, the adoption of adaptive timeouts, which increase over rounds,
ensures that the timing conditions (3) and (4) 
of \Cref{lemma:decision-in-good-round} 
are eventually satisfied, even if $\Delta$ is unknown.
\end{proof}

\section{Related Work}
\label{sec:rel-work}

\ftm is one of several BFT protocols that aim to decide after a single voting round.
Kudzu~\cite{kudzu} operates with $n \geq 3f + 2t + 1$ processes, committing in two rounds when at most $t$ of them are faulty and in three rounds under the full $f$ Byzantine budget.
With $t = f$, its fast path is similar to that of \ftm.
Hydrangea~\cite{hydrangea} parameterizes its fault budget with $n \geq 3f + 2c + k + 1$ and commits in two rounds when at most $\lfloor (c+k)/2 \rfloor$ processes are faulty, where $c$ bounds crash faults and $k \geq 0$ is a tunable parameter.
Retaining a failure-optimal slow path, as both protocols do, is an open question for \ftm.

Alpenglow~\cite{alpenglow} claims a two-round commit while simultaneously tolerating up to $20\%$ Byzantine and $20\%$ crash faults, yet this combined guarantee has been refuted~\cite{hydrangea}.
Moreover, once its erasure-coded dissemination layer is accounted for, end-to-end finality requires three communication rounds rather than two~\cite{bluebottle}.

Minimmit~\cite{minimmit} and BlueBottle~\cite{bluebottle}, like \ftm, achieve two-round commits with $n > 5f$ without a slow path.
Minimmit's view progression rule is similar to \ftm's \observationRule and may allow for further speed-up.
BlueBottle belongs to a recent (relative to the original Tendermint) line of BFT consensus protocols that construct and decide over a directed acyclic graph (DAG) rather than a single leader's chain.

Two-round finality is in fact achievable at the tighter bound $n \geq 5f-1$~\cite{abraham2022, kuznetsov2021} by identifying and excluding faulty leaders.
However, the added logic required to achieve this bound, in addition to the complex view changes of these protocols, does not make the trade-off currently beneficial for real-world deployments.

\section{Conclusion} \label{sec:conclusion}

\ftm adapts Tendermint to the $n > 5f$ setting and decides in two communication
steps in the good case while preserving Tendermint's per-round leader rotation that subsumes complex view changes.
The protocol drops the prevote step and merges $locked$ and $valid$ states into a single variable, yielding a minimal modification to a long-deployed consensus protocol.
Future work includes its full implementation and experimental evaluation.

\bibliographystyle{plainurl}
\iflong
    \bibliography{references}
\else
    \bibliography{references-short}
\fi

\end{document}